\newtheorem{problem}{Problem}
\newtheorem{mydef}{Definition}
\newtheorem{theorem}{Theorem}
\newtheorem{example}{Example}
\title{\LARGE \bf POMDP Model Learning for Human Robot Collaboration}
\author{Wei Zheng, Bo Wu and Hai Lin
\thanks{The partial support of the National Science Foundation (Grant No. CNS-1446288, ECCS-1253488, IIS-1724070) and of the Army Research Laboratory (Grant No. W911NF- 17-1-0072) is gratefully acknowledged.}
\thanks{The authors are with the Department of Electrical Engineering, University of Notre Dame, Notre Dame, IN, 46556 USA.
        {\tt\small wzheng1@nd.edu,bwu3@nd.edu,hlin1@nd.edu}}%
}
\begin{document}

\maketitle
\thispagestyle{empty}
\pagestyle{empty}

\begin{abstract}
Recent years have seen human robot collaboration (HRC) quickly emerged as a hot research area at the intersection of control, robotics, and psychology. While most of the existing work in HRC focused on either low-level human-aware motion planning or HRC interface design, we are particularly interested in a formal design of HRC with respect to high-level complex missions, where it is of critical importance to obtain an accurate and meanwhile tractable human model. Instead of assuming the human model is given, we ask whether it is reasonable to learn human models from observed perception data, such as the gesture, eye movements, head motions of the human in concern. As our initial step, we adopt a partially observable Markov decision process (POMDP) model in this work as mounting evidences have suggested Markovian properties of human behaviors from psychology studies. In addition, POMDP provides a general modeling framework for sequential decision making where states are hidden and actions have stochastic outcomes. Distinct from the majority of POMDP model learning literature, we do not assume that the state, the transition structure or the bound of the number of states in POMDP model is given. Instead, we use a Bayesian non-parametric learning approach to decide the potential human states from data. Then we adopt an approach inspired by probably approximately correct (PAC) learning to obtain not only an estimation of the transition probability but also a confidence interval associated to the estimation. Then, the performance of applying the control policy derived from the estimated model is guaranteed to be sufficiently close to the true model. Finally, data collected from a driver-assistance test-bed are used to train the model, which illustrates the effectiveness of the proposed learning method.

\end{abstract}

\section{INTRODUCTION}

Human-Robot Collaboration studies how to achieve effective collaborations between human and robots to synthetically combine the strengths of human beings and robots. While robots have advantages in handling repeated tasks with high precision and long endurance, human beings are much more flexible to changing factors which may introduce uncertainties that cost non-trivial efforts for robots to adapt. Therefore, to establish an efficient collaboration between human and robots is the core problem in the design of HRC system.

In recent years, Partially Observable Markov Decision Process (POMDP) models have emerged as one of the most popular models in HRC \cite{karami2010human}\cite{broz2011designing}\cite{nikolaidis2015efficient}. As a general probabilistic system model to capture uncertainties from sensing noises, actuation errors and human behaviors, POMDP model provides a comprehensive framework for the modeling and sequential decision making in HRC. The discrete states of the POMDP model can be used to represent robot and environment status and the hidden human intentions. In POMDP models, states are not directly observable but may be inferred by observations which represent the available sensing information regarding the statuses of human, robots, and the environment. Therefore, the partial observability on POMDP states models sensing noises and observation errors. Between different states, probabilistic transitions are triggered by various actions to describe uncertainties of the system actuation capabilities.

In this paper, we propose a framework to learn, through demonstrations, the HRC process which is modeled as a POMDP. A key challenge to learn a POMDP model from data is how to determine its hidden states.
Traditional approaches usually assume that the states in POMDP are given or the bound on the number of states is known. Instead, we drop these assumptions, as the states could be tedious to pre-define and the number of hidden states could be case dependent especially when human is involved. Hence, we propose to use a Bayesian non-parametric learning method to automatically identify the number of hidden states \cite{fox2009sharing}\cite{hughes2012effective}\cite{fox2014joint}.

Our proposed framework to learn a POMDP model is shown in Figure \ref{fig_diagram}. First, we assume the action set that represents the capability of the robot has been given. In each step, the robot chooses one action to collaborate with the human partner to accomplish some tasks. The training data can be collected by observing the physical system. After collecting enough training data, the structure (state space/observation space) of the POMDP model is determined from the data using a Bayesian non-parametric learning. In this paper, we propose to use the hidden Markov model (HMM) as the predictive model for the collected data. Each hidden state of the HMM represents a distinguishable motion pattern, and therefore corresponds to a state in POMDP. Taking advantages of the success of applying the Beta-Process Auto-regressive Hidden Markov Model (BP-AR-HMM) in human motion capture \cite{fox2014joint} and motion pattern recognition of humanoid robot for manipulation tasks \cite{wei2018solving}, we use this algorithm to automatically identify the number of hidden states of the HMM and learn the corresponding parameters.


Once the state/observation space is determined, the raw data are mapped to discrete observations using the maximum likelihood decision rule. Thus the discrete observation function can be obtained by calculating the decision error which only depends on parameters of the learned HMM model. Meanwhile, the sample mean is used to estimate the transition probability. Inspired by probably approximately correct (PAC) learning, we obtain not only an estimation of the transition probability but also a confidence interval associated to the estimation. Then the performance of applying the control policy derived from the estimated model is guaranteed to be sufficiently close to the optimal performance when the number of data for training is greater than a lower bound.

The main contribution of this paper is twofold. First, the Bayesian non-parametric learning method is used to learn the number of states automatically. Second, a lower bound on the number of training data is given. Once the number of training data is greater than this bound, the performance of the system from the optimal control policy based on the estimated model is guaranteed, with a high confidence, to be sufficiently close to the optimal performance of the true model.


\begin{figure}[!t]
\centering
\includegraphics[width=.8\linewidth]{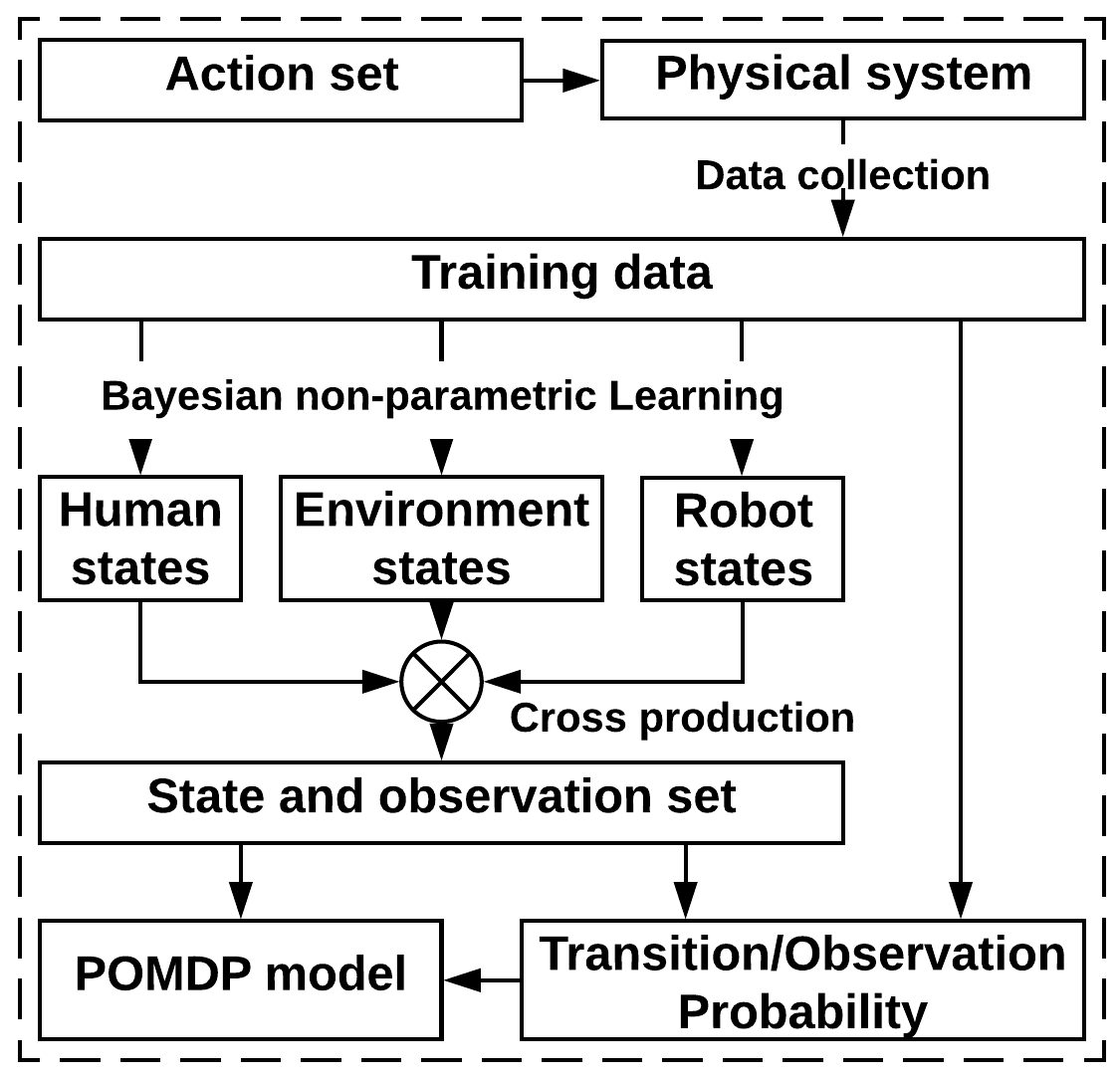}
\caption{Overview of the proposed framework. The training data are collected from the physical system. Using the training data, Bayesian non-parametric learning is used to learn the state/observation space of the POMDP automatically. The transition probability is estimated using the sample mean and the observation probability is obtained by calculating the decision error of the maximum likelihood decision rule.}
\label{fig_diagram}
\end{figure}






The rest of the paper is organized as follows. A summary of the related work is given in Section \ref{sec:relate}. Section \ref{sec:preliminaries} presents preliminaries on the POMDP model. Section \ref{sec:problem_formulation} formulates the problem. The learning framework with corresponding experiment results are shown in Section \ref{sec:mainresults}. Section \ref{sec:conc} concludes the paper.


\section{Related Work}
\label{sec:relate}
The POMDP model has received an increasing attention in the domain of HRC. In \cite{karami2010human}, the robot is capable of  predicting the goal of human by building a belief over his intentions. This prediction is integrated into a POMDP model where an appropriate decision can be solved to accomplish the shared mission. In \cite{broz2011designing}, human-robot social interactions are modeled as a POMDP where the intention of the human is represented as an unobservable part of the state space.  The ambiguous intentions of the human partner can be inferred through their observable behavior. The POMDP model is also used to predict the user's intention and get feedback from the user in terms of ``satisfaction" to enhance the quality of the interaction \cite{taha2011pomdp}. This strategy has been successfully applied in a robotic wheelchair and an intelligent walking device. In \cite{gopalan2015modeling}, POMDP is used to model joint tasks between a human and a robot. The POMDP state is a combination of the states of the world and the human mental state. The mental state is not visible to the robot, but it can be observed from the speech or gestures. In \cite{fern2014decision}, the hidden-goal MDP (HGMDP) model is introduced as a class of POMDP model, which is used to formalize and design an intelligent assistant. In \cite{wang2017anticipatory}, the problem of playing table tennis by human and robots is formulated as a POMDP model. The transition and observation are determined by the intention-driven dynamics model which allows the intention to be inferred from observed movements using Bayes' theorem \cite{wang2013probabilistic}.

Most of the aforementioned work assumes that the POMDP model is given. However, it is challenging to  get a POMDP model that realistically describes the proprieties of the system. A natural way is to learn the POMDP model from data. \cite{jaulmes2005active} studies active learning strategies for POMDPs. This method could handle significant uncertainties with relatively little training data. However, the need of an oracle
to answer the true identity of the hidden state could be too strong in some circumstances. From the statistical point of view, POMDP is an extension of HMM, which enable directly applying HMM learning algorithm to POMDP model learning when the action is fixed. For HMM model learning, the most standard algorithm for the unsupervised learning is the Baum-Welch algorithm \cite{bilmes1998gentle}. For the POMDP model learning, the Baum-Welch algorithm fixes the number of hidden states and begins with random initial conditions to learn the transition probabilities given the observation sequences. By using the Expectation-Maximization (EM) algorithm, a new state space can be updated to maximize the likelihood of the observations, then the Baum-Welch algorithm is executed again until the termination condition satisfied \cite{shani2007learning}. However, these learning procedures do not consider the complexity of the model and the model structure has to be specified in advance. This motivates us to apply the Bayesian non-parametric learning on POMDP model learning since it is then possible to extend hidden Markov models to have a countably infinite number of hidden states and the number of states can be inferred from data \cite{fox2009sharing}\cite{beal2002infinite}.



Theoretically, exact model learning through the observation and action sequences in HRC requires an infinite number of samples, which is not feasible in practice. Therefore, with only a finite number of samples, a salient approach is to estimate the actual POMDP within a certain precision and bound the optimality loss, which is the core idea of  PAC learning method. Existing results of PAC learning mostly focus on MDP models \cite{kearns2002near,fu2014probably,wu2017learning} and stochastic games \cite{brafman2002r,wen2016probably}. The main idea is to maintain and update an MDP model learned from observed state-action sequences through sampling. It can be proven that when the learning stops, with a predefined high probability, the estimated model approximates the actual model to a specified extent, such that the optimal policy found in the estimated model incurs a cost(reward) that is close to the optimal cost(reward) on the true model to a specified degree. Furthermore, the time, space and sampling complexity is polynomial with respect to the size of the MDP and measures of accuracy. However, for partially observable models, the existing results are only seen for HMMs \cite{gavalda2006pac}. In this paper, we are able to establish the PAC learning-like guarantee on POMDP models.


Distinct from most of the existing work on using POMDP models in HRC, we assume that the POMDP model including the structure of the model (states/observation space, transition/observation relations) are unknown, while most of the existing work assumes the number of states is given. Instead, we pursue a data-driven approach to learn such a model where the number of states is inferred from data without the need of prior knowledge and can be potentially infinite. Meanwhile, we provide a lower bound on the number of training data to guarantee the performance (optimality loss) satisfies the requirement.


\section{Preliminaries}\label{sec:preliminaries}

We begin with a brief description of the POMDP model for HRC.
\begin{mydef}\label{def:pomdp}
A POMDP model is defined as a tuple $P= (S, A, T, O , E, R)$
where $S$, $A$ and $O$ are sets of states, actions and observations. $T :S \times A \times S \to [0, 1]$ is a transition function and $E : S \times O \to [0, 1]$ is an observation function. The transition function $T(s'|s,a)$ defines the distribution over the next state $s'$ after taking an action $a$ from the state $s$. The observation function $E(o|s')$ is a distribution over the observation $o$ that may occur in the state $s'$. $R: S \times A \to \mathbb{R} $ is the reward function that represents the agent's preferences.
\end{mydef}

The control problem in POMDP targets on finding a policy that maximizes the expectation of cumulative rewards. Since states are not directly observable in POMDP model, the available information is observation-action sequence which is called history.

\begin{mydef}
Given a POMDP model $M$ and the corresponding history set $Hist$, the control policy $f : Hist \to A$ is a mapping from histories to actions.
\end{mydef}

Given a policy $f$, we can calculate the expected cumulative reward by the following equation.
\begin{equation}
    V= \sum_\rho p(\rho) R_c(\rho)
\end{equation}
where $\rho$ is a state-observation sequence, $p(\rho)$ is the probability of sequence $\rho$ and $R_c(\rho)$ is the cumulative reward when states and observations of the system follow sequence $\rho$.

The value $V$ depends on the transition probability, the observation probability and the reward function. Thus the accuracy of the model learned from data will directly influence the control policy solved and subsequently, influence the performance when applying the control policy on the system.


\begin{figure}[!t]
\centering
\includegraphics[width=.8\linewidth]{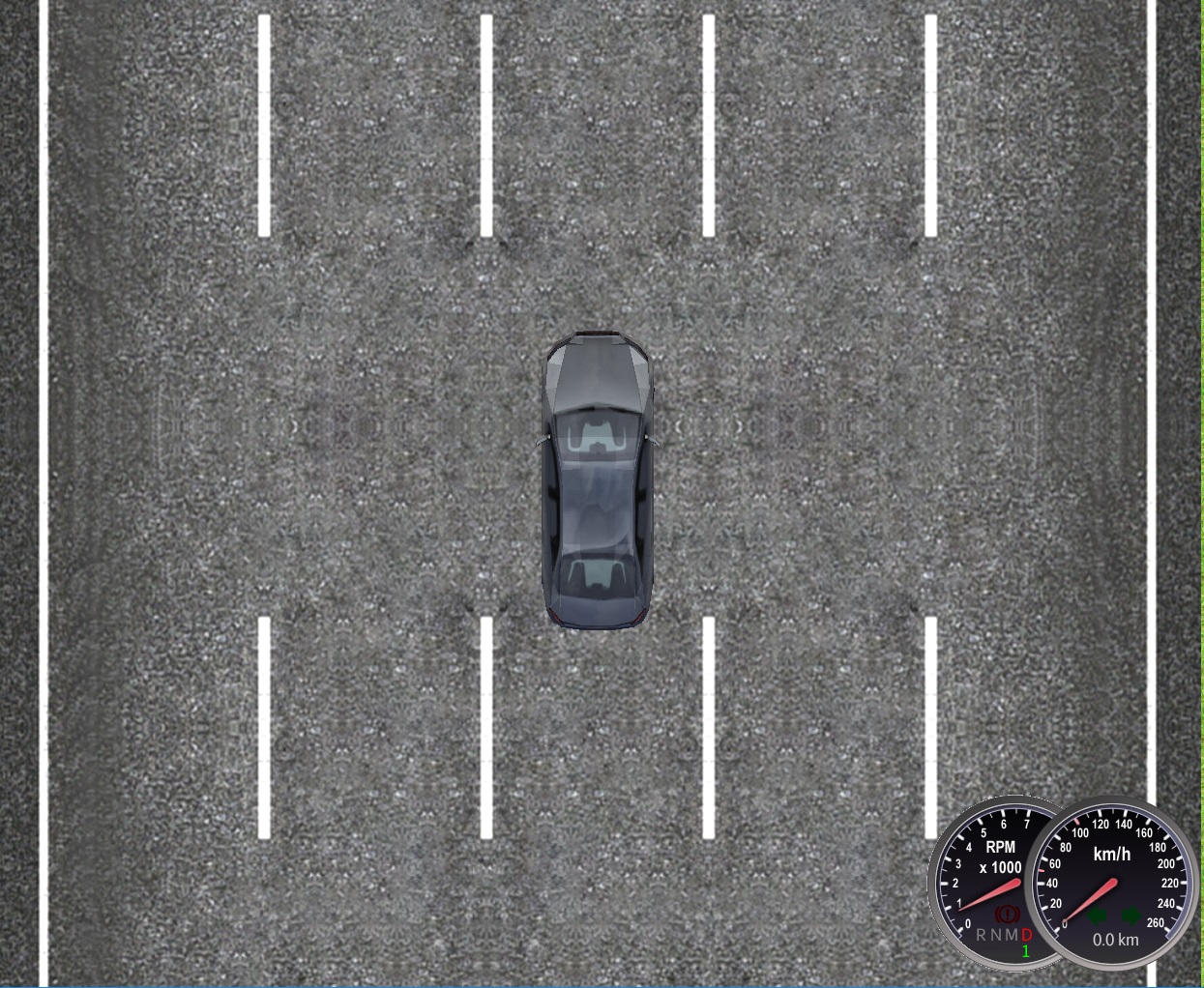}
\caption{The software OpenDs is used to simulate a vehicle driving on a five lane road. }
\label{fig_simulator}
\end{figure}

To quantify the closeness of POMDP models, we extend the definition of $\alpha$-approximation in MDPs to POMDPs \cite{kearns2002near}.

\begin{mydef}
($\alpha$-approximation of POMDP) Let $M$ and $\bar M$ be two POMDP models over the same state, action and observation space. $\bar M$ is an $\alpha$-approximation of $M$ if for any state $s$ and $s'$ and any action $a$, the inequality $\mathopen|T_M(s'|s,a)-T_{\bar M}(s'|s,a) \mathclose| \leq \alpha$ holds where $0<\alpha<1$. And for any state $s$, action $a$ and observation $o$, equalities $R_M(s,a)=R_{\bar M}(s,a)$ and $E_M(o|s)=E_{\bar M}(o|s)$ hold.
\end{mydef}

\section{Problem Formulation}\label{sec:problem_formulation}
In this paper, we consider the problem of learning the POMDP model for HRC and use the driver assistance system as a motivating example to illustrate the proposed approach.

\begin{example}
Consider the scenario where a human is driving on a five-lane road and the robot (vehicle) is to assist the driver to increase driving safety as shown in Figure \ref{fig_simulator}, we define the POMDP model for HRC in the following way.

The state space of the POMDP is the product of the state space of human, robot and environment, namely, $S=S_D \times S_R \times S_E$. The human state is his/her intention such as turning left/right or answering the cell phone. The state of the robot is the lane that the vehicle is currently driving on, namely $S_R=\{lane_1,...,lane_5\}$. The state of the environment indicates whether or not there are other vehicles around. Observation space of the POMDP is the same as the state space but inferred from the observed data. The action set of the robot can be advice, instructions and warnings to help the human driver to avoid potential dangers. The reward function is defined on the state and action pairs which quantifies the preference of actions in a specific state.
For example, if the current state is $s={(answerphone,lane_3,front)}$ which means that the human intends to answer the cellphone, the vehicle is in the lane $3$ and there is a vehicle in the front, the action that providing a warning is preferred. Thus the reward $R(s,``warning")$ is assigned to be a positive value.
\end{example}

Based on the POMDP model for HRC, an optimal control policy can be solved using POMDP planning algorithms by maximizing the expectation of cumulative rewards and subsequently, increase the driving safety. As the first step, to obtain a POMDP model is a critical problem for HRC.

\begin{problem}
Given the finite horizon $H$, upper bound of the reward function $R_{max}$, confidence level $\delta$ and a constant $\epsilon>0$, learn a POMDP model using the training data, such that the $H$ step expected cumulative reward is $\epsilon$ close to the optimal expected cumulative reward with confidence level no less than $\delta$.
\end{problem}

\section{Main Results}\label{sec:mainresults}
To obtain a POMDP model, all elements of the tuple in Definition \ref{def:pomdp} should be defined according to the application scenario. The action set represents the capability of the robot which has been determined when the robot is designed, and thus  is assumed to be known. The reward function is usually determined by the designer in the policy design process, so it is  assumed to be known as well. Therefore, in our paper, we mainly focus on determining and estimating the state/observation space, transition function and observation function.



\subsection{State space learning}\label{subsec:human_modeling}
Distinct from the majority of existing work on POMDP model learning from data, which usually assumes that the number of states is given, we propose to use Bayesian non-parametric learning to automatically identify the appearance of hidden states \cite{fox2014joint}. In this method, the BP-AR-HMM is used as the generative model of the data which is shown in Figure \ref{fig_bpar}. In the top layer, the Beta-Bernoulli process is used to generate an infinite number of features and to model the appearance of features among multiple time series. In the lower layer, the AR-HMM is used to model the relationship between the feature (hidden state) and the observed time series.


We begin with an introduction to the BP-AR-HMM model from the lower layer. For each time series $Y^i=[y^i_{1},...,y^i_{T_i}]$ where $y_t^i \in \mathbb{R}^{n}$ is a $n$ dimensional observed vector which could be eye movements, head motions and skeleton motions of the human, the generative model is described as
\begin{equation}
\begin{aligned}
   z_t^{i} & \sim \pi_{z_{t-1}^{i}}^{i}\\
   y_t^{i} & = \sum_{j=1}^{r} A_{j,z_t^{i}} y_{t-j}^{i}+e_t^{i}(z_t^{i})
    \end{aligned}
\end{equation}
where $e_t^{i}(k) \sim \mathcal{N}(0,\Sigma_k)$ is a Gaussian noise to capture the uncertainty and $r$ is the order of the AR-HMM. The variable $z^i_t$ is the hidden state and $\pi^i_j$ specifies the transition distribution of the state $j$. For each hidden state $z_t^i$, a set of parameters $\theta_{z_t^i}=\{A_{1,z_t^{i}},...,A_{r,z_t^{i}},\Sigma_{z_t^{i}}\}$ is used to characterize the corresponding motion pattern. Note that the HMM with Gaussian emissions is a special case of this model where the parameter will be adaptive to  $\theta_{z_t^i}=\{\mu_{z_t^{i}},\Sigma_{z_t^{i}}\}$. In this paper, the HMM model is used as the generative model since the behavior of human is constrained in a certain area in the driving scenario which makes the HMM sufficient to model the behavior.

This AR-HMM/HMM can be used to model only one time series while the training data will be multiple time series. Thus, the Beta-Bernoulli process is used to model the correlation between different time series in the top layer. This process is summarized as follows,
\begin{equation}
\begin{aligned}
   B|B_0 & \sim \text{BP}(c,B_0)\\
   X_i|B & \sim \text{BeP}(B)\\
   \pi_j^{i}| f_i,\gamma, \kappa & \sim \text{Dir}([\gamma,...,\gamma+\kappa,\gamma,...] \otimes f_i)
    \end{aligned}
\end{equation}
where $\text{BP}$ stands for Beta process, $\text{BeP}$ stands for Bernoulli process and $\text{Dir}$ stands for Dirichlet distribution.

A draw $B$ from a Beta process provides a set of global weights for the potentially infinite number of hidden states. For each time series $i$, an $X_i$ is drawn from a Bernoulli process parameterized by $B$. Each $X_i$ can be used to construct a binary vector $f_i$ indicating which of the global hidden state are selected in the $i^{th}$ time series. Then the transition probability vector $\pi^{i}_j$ of AR-HMM/HMM is drawn from a Dirichlet distribution with self-transition bias $\kappa$ for each state $j$.

Based on this generative model, parameters such as the hidden variable $z^i_t$ and $\theta_{z_t^i}$ can be inferred from data using the Markov chain Monte Carlo (MCMC) method.

\begin{figure}[!t]
\centering
\includegraphics[width=.8\linewidth]{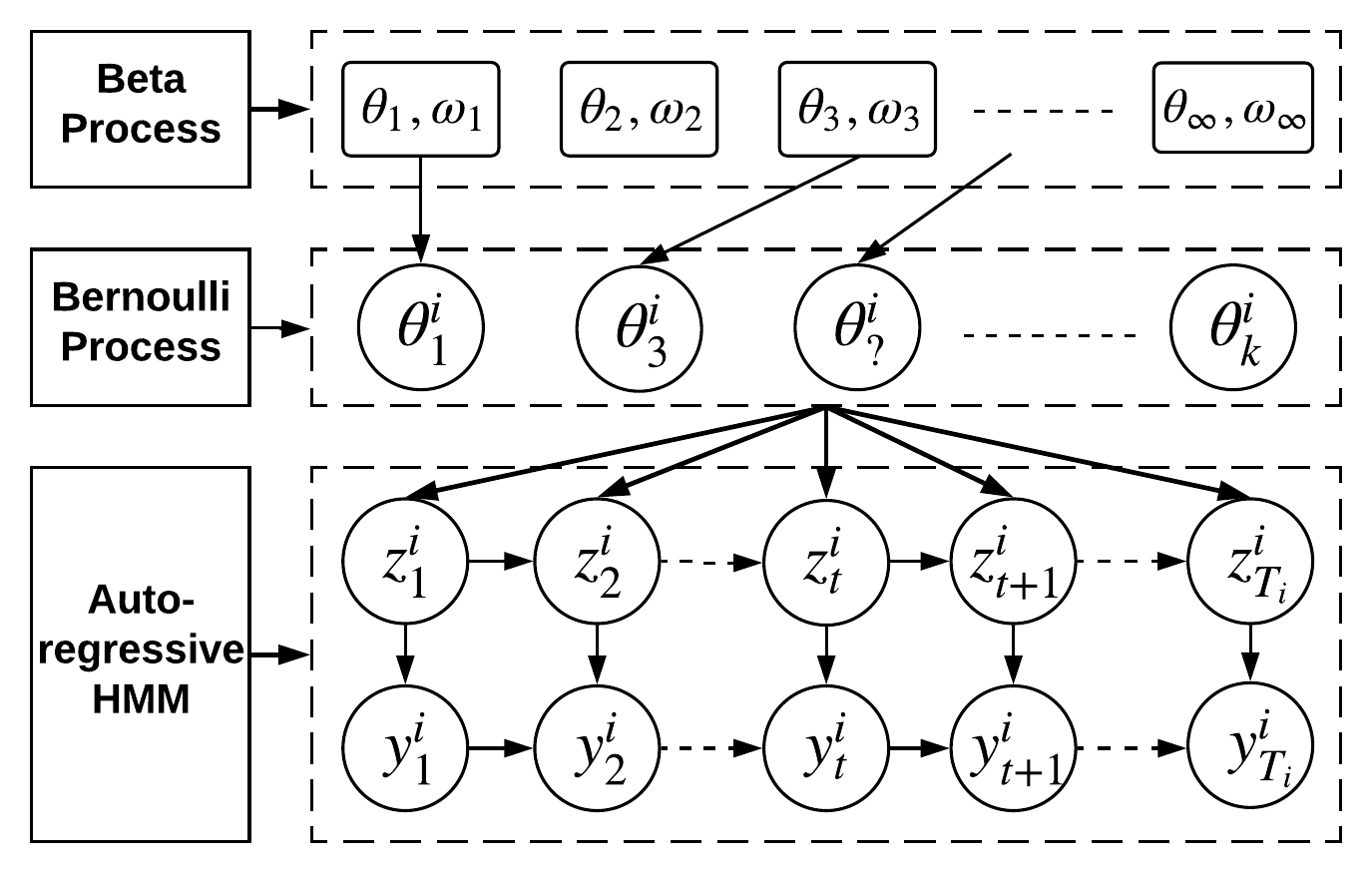}
\caption{The Beta process auto-regressive hidden Markov model. In the top layer, the Beta-Bernoulli process is used to generate an infinite number of features and to model the appearance of features among multiple time series. In the lower layer, the AR-HMM is used to model the relationship between the feature and the observed time series.}
\label{fig_bpar}
\end{figure}

Assume there are total $K$ time series in the training dataset $\{Y^1,...,Y^K\}$. After using the data set to train the BP-HMM model, for each data point $y_t^i$ in the trajectory $Y^i$, a hidden state $z_t^i$ is inferred to indicate under which hidden state the data is going to emit. The total number of hidden state in the model indicates the number of human motion patterns in the training dataset. Since the hidden state could exhibit among multiple time series, the same human motion pattern shared among these time series can be detected. Thus all possible human motion patterns that exhibit in the application scenario can be defined.

For convenience, we use $Z=\{z_1,...,z_{L}\}$ to represent the set of hidden states where $L$ is the total number of hidden states. After getting the set $Z$, the discrete state space of human $S_D$ can be defined as the same set of $Z$ with corresponding parameters $\theta_z$, namely $S_D=\{(z,\theta_z)|z \in Z\}$. If the state space of the robot $S_R$ and the environment $S_E$ has been defined using the same method, the state space of the POMDP model is simply the product of the human state space with the robot state space and the environment state space $S=S_D \times S_R \times S_E$. In this way, the number of states is inferred from the data without the need of prior knowledge and can be potentially infinite. It, therefore, provides a full Bayesian analysis of the complexity and structure of human models instead of defining them in advance. Meanwhile, the hidden parameter $\theta_z$ helps to characterize the mathematical property of the state and enable us to detect the hidden state from observations.

\begin{example}
As a running example, a driver and hardware-in-the-loop simulation system are used to validate the proposed approach. The software OpenDS\footnote{https://www.opends.eu/home} is used as the driving simulators and the Optitrack\footnote{http://optitrack.com/} system is used to capture the motion of the driver. By putting markers on the left/right hand of the driver and the steering wheel, a time series of positions are collected for each experiment. Each time series consists of driver behaviors such as turning left or right, answering phones and push a button on the laptop which is used to simulate operating instruments of the vehicle. The data from four experiments are used to train the BP-HMM model, which is shown in Figure \ref{fig:trajectory}. There are totally $1.4 \times 10^{4}$ data points in each time series, we only pick points from $1.1 \times 10^4$ to $1.4 \times 10^4$ to show.

\begin{figure}[t]
\includegraphics[width=1\linewidth]{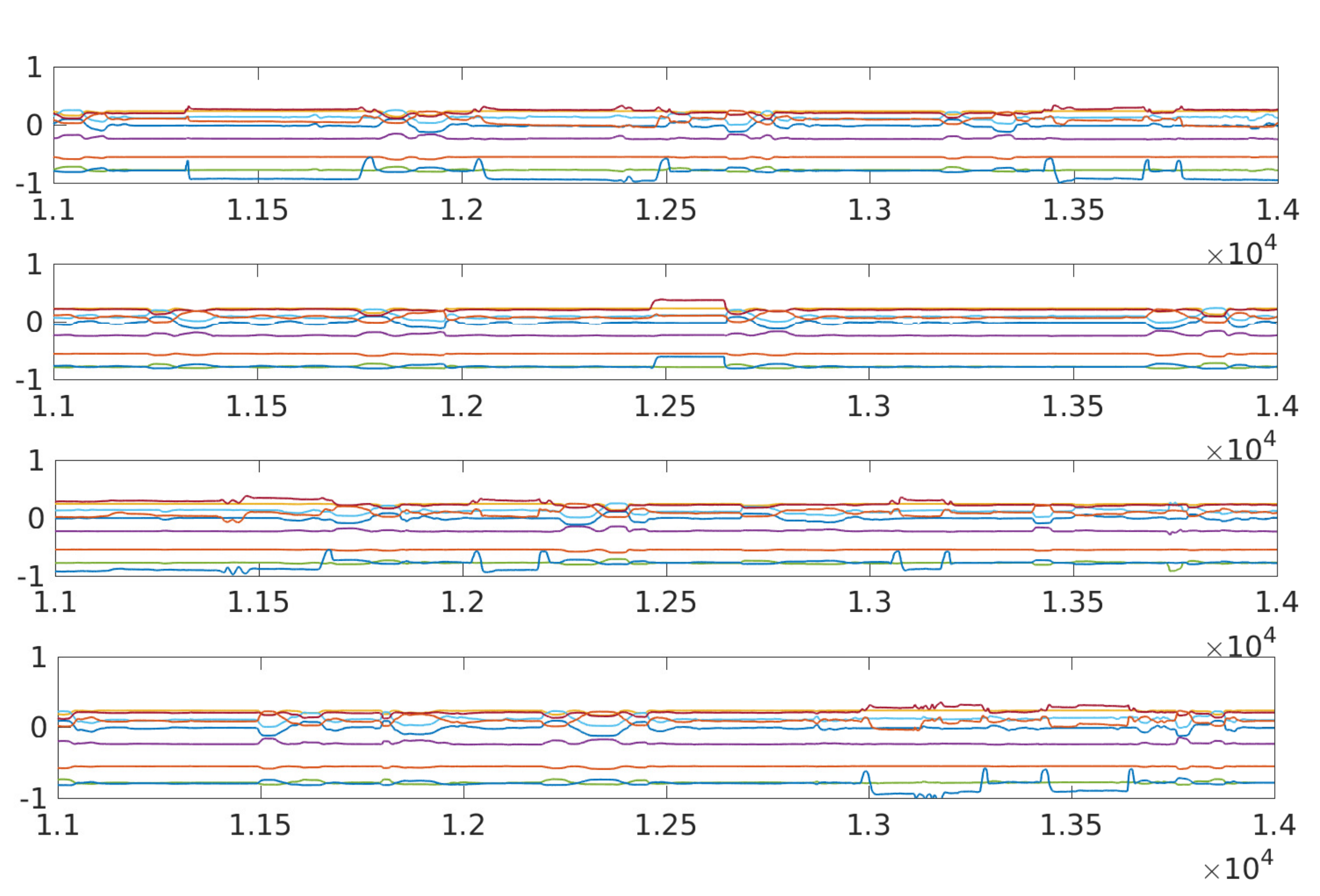}
\centering
\caption{Trajectories of 3D position of three markers. One marker is on the left hand of human and another is on the right hand. The third marker is on the steering wheel.}
\label{fig:trajectory}
\end{figure}

Then using the software provided by \cite{fox2014joint}, the learning result is shown in Figure \ref{fig:infer_intention}. In the figure, different hidden states are labeled by different colors. From the result, there are totally six motion features detected with the parameter $\theta_z$. Based on the learning result, each hidden state can be labeled with a physical meaning such as turning left/right.

Note that this state identification process needs to be done off-line by stored training data and a human supervisor is requested to label the identified state with physical meaning. The human supervisor only needs to label the hidden state, which is must easier than labeling each training data point. Actually, the need of labeling is only for further reward and control policy design, if the reward or preference can be learned from data, there is no need to label the identified state.

\begin{figure}[t]
\includegraphics[width=.9\linewidth]{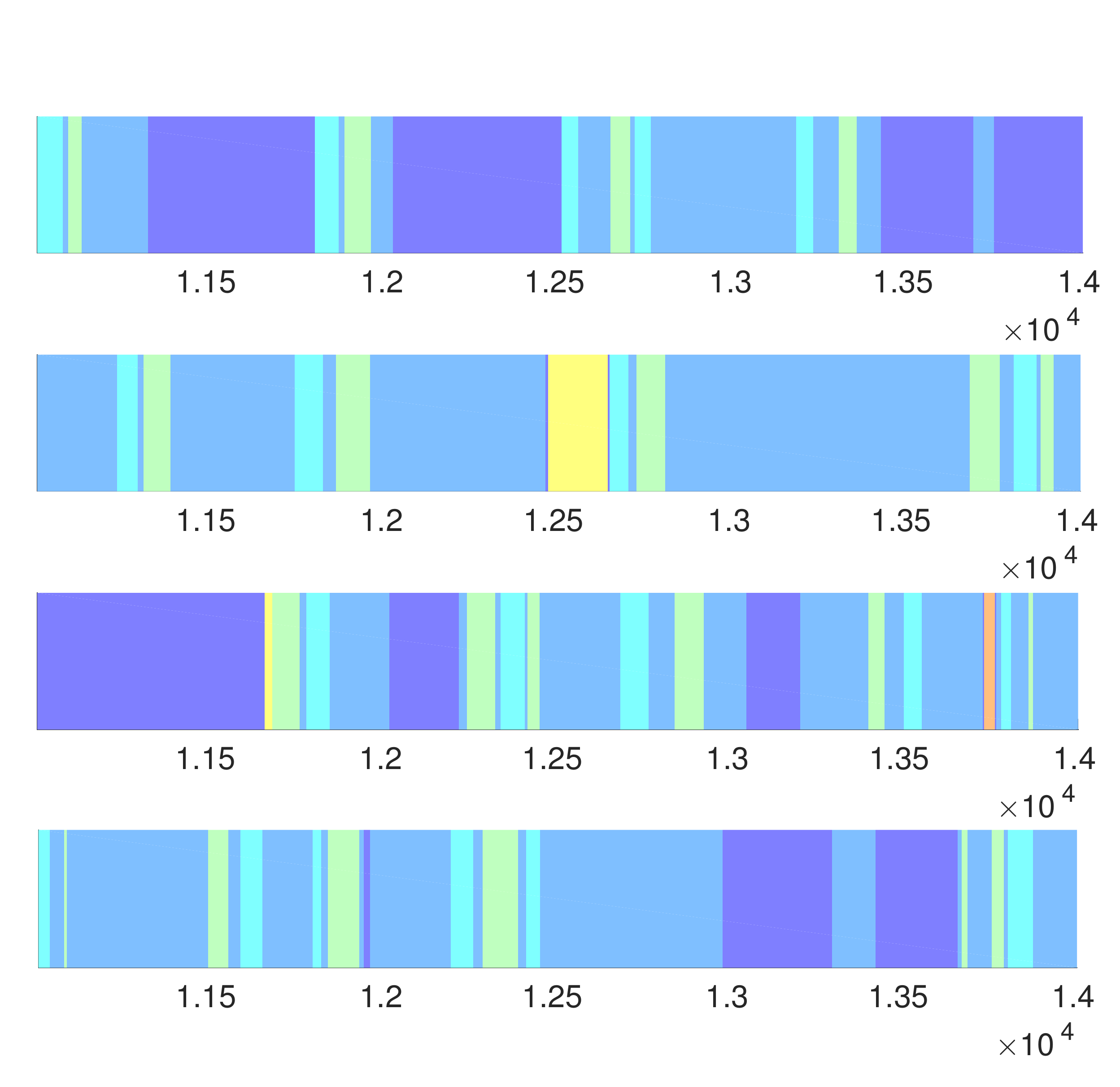}
\centering
\caption{The learning result of the Bayesian non-parametric method. There are six motion features detected which are distinguished by different colors. Correspondingly, the number of states for human in the POMDP model is six.}
\label{fig:infer_intention}
\end{figure}

\end{example}

\subsection{Observation function calculation}\label{subsec:obsservationfunction}
From the motion capture system, the observed data $y_t$ is a $n$ dimensional vector from continuous space. A decision rule is needed to map the observed data $y_t$ to discrete observation $o(t) \in O$ where $O$ is the observation set. In this paper, the state of robot $S_R$ and environment $S_E$ are assumed to be observed directly and correctly. This assumption could be dropped if the state of the robot and/or environment is learned in the way introduced in section \ref{subsec:human_modeling}. Thus the observation set $O$ is defined as the same set as $S_D$, namely $O=\{(z,\theta_z)|z \in Z\}$.

For each observation $y_t$, we propose to use the maximum likelihood (ML) estimator as the decision rule to map the observed data to one element of the observation set $O$. The decision rule is
\begin{equation}
    o(t)= \arg \max_{z_i \in Z}  L(y_t|s(t)=z_i)
\end{equation}
where $L(y_t|s(t)=z_i)$ is the likelihood of observing data $y_t$ when the state $s(t)$ is $z_i$. Since the emission of the HMM is a multivariate Gaussian distribution, the likelihood is
\begin{equation}
\begin{aligned}
    & L(y_t|s(t)=z_i)\\
    = & \frac{1}{\sqrt {(2\pi)^n \mathopen|\Sigma_{z_i}\mathclose|}} e^{-\frac{1}{2}(y_t-\mu_{z_i})^T \Sigma_{z_i}^{-1} (y_t-\mu_{z_i})}
\end{aligned}
\end{equation}

With the ML decision rule, the whole $n$ dimensional space is divided into $N$ disjoint regions $\Theta_j$ where
\begin{equation}
    \Theta_{j}=\{y_t| L(y_t|s(t)=z_j) \geq L(y_t|s(t)=z_{k}), \forall k \neq j\}.
\end{equation}
Then the decision rule is equivalent to
\begin{equation}
    o(t)= \{z_j|y_t \in \Theta_j\}.
\end{equation}

Due to the existence of the sensing noise, there is a certain detection error for the ML rule. Given the state $s(t)=z_i$, the probability of observing $o(t)=z_j$ is
\begin{equation}
    P(o(t)=z_j|s(t)=z_i)=\int_{\Theta_j} L(y_t|s(t)=z_i)
\end{equation}
Since the likelihood function is a multivariate Gaussian distribution function, the disjoint regions $\Theta_j$ is only determined by the parameters $\{(\mu_z,\Sigma_{z}), \forall z \in Z\}$ achieved in section \ref{subsec:human_modeling}. Thus it is reasonable to assert that the observation function of the POMDP model is fixed. However, to directly calculate the integration over region $\Theta_j$ is highly nontrivial since the function to be integrated is of high order and the integration boundary conditions can only be represented by inequality constraints. Therefore, in this paper, we follow the conventional way to use the Monte Carlo method to estimate the integration \cite{robert1998monte}.

Given the number of samples $N_{mc}$, we first sample from the multivariate Gaussian distribution and then decide which region the sample belongs to. Finally, the frequency of samples that exhibit in the region $\Theta_j$ is used to approximate the integration over $\Theta_j$.

\begin{example}
For example, if the current state is $s(t)=z_1$, the multivariate Gaussian emission parameter is ${(\mu_1,\Sigma_{1})}$. Using Monte Carlo method, $10^6$ vectors are sampled from the multivariate Gaussian distribution and then each vector is mapped to an observation using the ML rule. The frequency is used to approximate the observation probability. The observation probability for state $s(t)=z_1$ and $s(t)=z_4$ are shown in Figure \ref{fig:observation}.
\end{example}

\begin{figure}[t]
    \centering
    \includegraphics[width=1\linewidth]{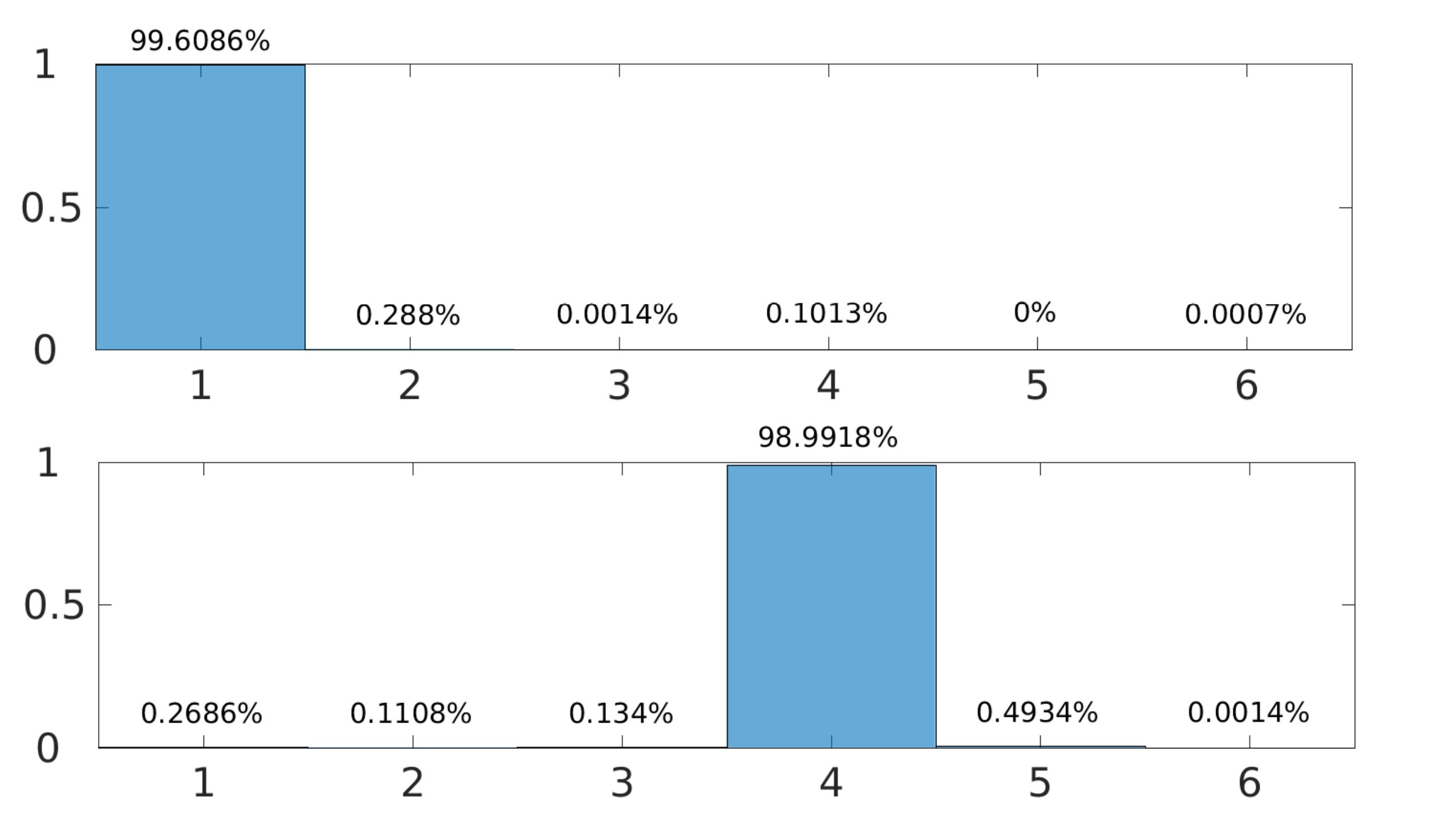}
    \caption{The observation probability calculated using Monte Carlo method when the state of POMDP is $s(t)=z_1$ and $s(t)=z_4$.}
    \label{fig:observation}
\end{figure}

\subsection{Transition function approximation}\label{subsec:transitionprobability}
In the driver assistance system, actions are designed to increase car safety and road safety by providing real-time advice, instructions and warnings or directly controlling the vehicles \cite{brookhuis2001behavioural}. If actions are designed in an advisory way, the transition probability captures uncertainties of the influence of actions on the human. Otherwise, if actions are designed in the automatic mode, the transition probability captures uncertainties from system actuation abilities. To learn the exact transition probability is difficult since the learned POMDP model may contain modeling uncertainties due to various reasons such as limited data and insufficient inference time \cite{lehner1996introduction}. In these cases, the modeling uncertainties will make the learned transition probabilities subject to a certain confidence level. This motivates us to apply the Chernoff bound to reason the accuracy of the transition probabilities for POMDP \cite{fu2014probably}.

Let $w(s'|s,a)$ denote the number of transitions observed in the data set from $s$ to $s'$ after taking action $a$ and $w(s,a) = \sum_{s'} w(s'|s,a)$ represents the total number of transitions start from $s$ for action $a$. Let $M$ denote the true model of the system which share the same structure of the learned model $\bar M$ . We use the sample mean to estimate the transition probability,
\begin{equation}
    T_{\bar M}(s'|s,a)= \frac{w(s'|s,a)}{w(s,a)}.
\end{equation}

The Chernoff bound gives the relation between the confidence level, the confidence interval and the number of data needed \cite{vadhan2012pseudorandomness}.
\begin{equation}
    P(|T_{\bar M} (s'|s,a)-T_{M} (s'|s,a)| \leq \alpha ) \geq 1-2 e^{-\frac{\alpha^2 w(s,a)}{2}}
\end{equation}
where $\alpha>0$. Given an accepted probability error bound $\alpha$ and a corresponding confidence level $\delta$, the number of data $w(s,a)$ should satisfy the following equation.
\begin{equation}\label{eq:num_data}
    w(s,a) \geq \lceil -\frac{2}{\alpha^2} \ln(\frac{1-\delta}{2}) \rceil.
\end{equation}

This bound gives a guidance for the transition probability training process. If the number of data collected for each state and action pair satisfies equation \ref{eq:num_data}, the POMDP model learned from data will be $\alpha$-approximation of the true model with confidence level no less than $\delta$.



\begin{example}
Assume the transition probability for state $s(t)=z_1$ is $[0.02,0.03,0.05,0.08,0.12,0.7]$. Let $\alpha=0.01$ and the confidence level $\delta=0.95$. The minimum number of data for training is $w(s,a)=73777$. We collect $73777$ samples from this distribution and use the sample mean to estimate the distribution. The sample mean is $[0.0203,0.0299,0.0510,0.0789,0.1177,0.7023]$. From the example, we see all the estimation fall into the confidence interval of the true distribution.
\end{example}

\subsection{Performance analysis using estimated model}\label{subsec:bound}
Since the true POMDP model is unknown, the control policy is designed only on the estimated model. A performance analysis is needed to evaluate the control policy.

In partially observable environments, the agent makes its decision based on the history of its actions and observations. A $H$-horizon policy tree is illustrated in Figure \ref{fig:policy}. Begin with a dummy node, the distribution of initial state is given as an initial belief $b_0$. In each step, an action is selected according to the observation of the system. After taking the action, the system makes a transition to another state in a stochastic way and then another action could be selected according to the new observation. This process will continue for $H$ steps.

Since the policy tree is calculated based on the estimated model $\bar M$ and applied to the real system or true model $M$, the performance of the system under the policy needs to be evaluated. In the following, we will discuss the finite step performance evaluation of the policy.

\begin{figure}[t]
\includegraphics[width=0.7\linewidth]{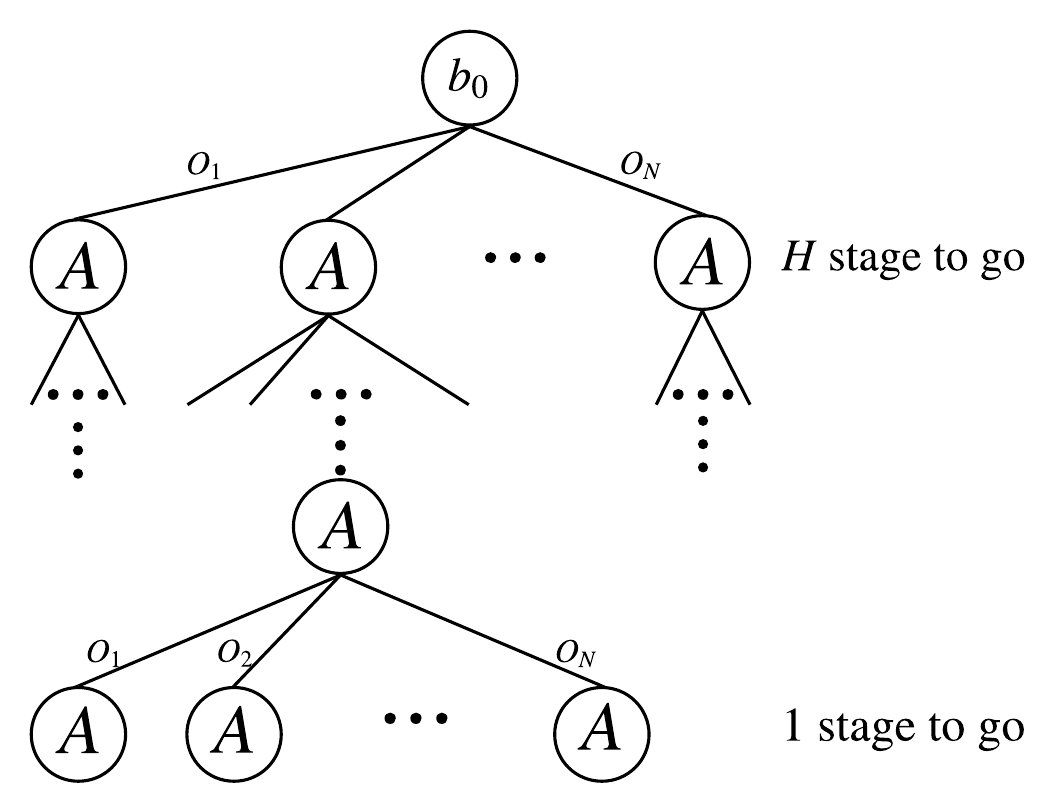}
\centering
\caption{A policy tree for horizon $H$. For each observation, there is a specific action selected from the action set $A$.}
\label{fig:policy}
\end{figure}

\begin{theorem}\label{thero1}
Given a POMDP model $\bar M$ which is an $\frac{\epsilon}{H^2 N R_{max}}$-approximation of the model $M$ where $H$ is a finite horizon, $N$ is the number of states, $R_{max}$ is the upper bound of the reward function and $0<\epsilon<1$. For any control policy $f$, the inequality $\mathopen|V^f_M-V^f_{\bar M}\mathclose| \leq \epsilon$ holds, where $V^f_M$ is the expected cumulative reward for the model $M$ under the control of the policy $f$.
\end{theorem}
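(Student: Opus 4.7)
The plan is to run an induction on the remaining horizon. First, define, for any policy $f$ (viewed as a depth-$H$ policy tree) and any history $\sigma$ consistent with $f$, the $h$-step value-to-go
\begin{equation*}
V^f_{M,h}(\sigma) = \sum_s b_\sigma(s)\Big[R(s, f(\sigma)) + \sum_{s'} T_M(s'|s, f(\sigma)) \sum_o E(o|s')\, V^f_{M, h-1}(\sigma')\Big],
\end{equation*}
with $\sigma' = (\sigma, f(\sigma), o)$, $V^f_{M,0}\equiv 0$, and analogously for $\bar M$. Let $\Delta_h := \sup_\sigma |V^f_{M,h}(\sigma) - V^f_{\bar M, h}(\sigma)|$. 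The theorem reduces to showing $\Delta_H \le \epsilon$, evaluated at the initial belief $b_0$.

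Next I would subtract the two Bellman-style recursions and add/subtract the mixed term in which the true transitions $T_M$ are paired with the approximate continuations $V^f_{\bar M, h-1}$. By the definition of $\alpha$-approximation, $R$ and $E$ agree on $M$ and $\bar M$, so the reward terms cancel. What remains splits into (i) a one-step transition error, namely $[T_M - T_{\bar M}]$ multiplied by the continuation value $V^f_{\bar M, h-1}$ and a probability kernel $E(o|s')$, and (ii) a downstream error in which $T_M$ and $E$ form a probability distribution weighting $V^f_{M,h-1}(\sigma') - V^f_{\bar M, h-1}(\sigma')$.

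For piece (i), I would use $|T_M(s'|s,a) - T_{\bar M}(s'|s,a)| \le \alpha := \epsilon/(H^2 N R_{max})$, sum over the $N$ successor states, collapse the observation sum via $\sum_o E(o|s') = 1$, and bound the continuation by $V^f_{\bar M, h-1} \le (h-1)R_{max} \le H R_{max}$. This yields a per-step bound of $N H R_{max}\,\alpha$. For piece (ii), the weights form a probability distribution, so taking the supremum over the reachable successor histories $\sigma'$ bounds it by $\Delta_{h-1}$. Combining gives the recursion $\Delta_h \le N H R_{max}\,\alpha + \Delta_{h-1}$, and unrolling from $\Delta_0 = 0$ yields $\Delta_H \le H \cdot N H R_{max}\,\alpha = H^2 N R_{max}\,\alpha = \epsilon$.

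The argument is essentially the standard simulation lemma specialised to the case where only transitions are perturbed, so the main thing to get right is not a deep obstacle but careful bookkeeping: treating $f$ as history-dependent so the recursion is well-defined at every node of the policy tree, taking the supremum over histories when invoking the induction hypothesis (so the $(h{-}1)$-bound is uniform over the many successor histories reachable in one step), and keeping track of where the factors of $N$ (from summing over successor states) and $H^2$ (one $H$ from bounding the continuation, one from compounding errors over $H$ steps) enter.
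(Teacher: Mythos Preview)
Your inductive Bellman-backup argument is a genuinely different route from the paper's proof. The paper works directly on trajectory distributions: it writes $|V^f_M - V^f_{\bar M}| \le H R_{max}\sum_\rho |p_M(\rho)-p_{\bar M}(\rho)|$ and then controls the total-variation-like sum by a hybrid (coupling) argument, introducing intermediate models $h_i$ that use $\bar M$'s transitions for the first $i$ steps and $M$'s thereafter, and telescoping across $i=0,\dots,H$. Your induction telescopes in the value function instead of in the trajectory measure; it is arguably more elementary and makes the provenance of the factors $N$ and $H^2$ transparent, while the paper's hybrid argument has the advantage of bounding $\sum_\rho|p_M(\rho)-p_{\bar M}(\rho)|$ once and for all, independent of the reward structure.

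There is, however, one technical slip to repair. In your recursion for $V^f_{M,h}(\sigma)$ you weight by the belief $b_\sigma(s)$, but that belief is model-dependent: it is computed from $T_M$ (respectively $T_{\bar M}$) together with $E$. Hence for histories $\sigma$ beyond the root the beliefs $b^M_\sigma$ and $b^{\bar M}_\sigma$ differ, and the immediate-reward terms $\sum_s b^M_\sigma(s)R(s,a)$ and $\sum_s b^{\bar M}_\sigma(s)R(s,a)$ do \emph{not} cancel merely because $R$ is shared. The clean fix is to run the induction on the state-conditioned value
\[
Q^f_{M,h}(\sigma,s)=R(s,a)+\sum_{s'}T_M(s'\mid s,a)\sum_o E(o\mid s')\,Q^f_{M,h-1}(\sigma',s'),
\]
set $\tilde\Delta_h=\sup_{\sigma,s}\bigl|Q^f_{M,h}(\sigma,s)-Q^f_{\bar M,h}(\sigma,s)\bigr|$, and observe $V^f_M=\sum_s b_0(s)\,Q^f_{M,H}(\sigma_0,s)$. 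Now the reward term cancels exactly, your pieces (i) and (ii) go through verbatim to give $\tilde\Delta_h\le NHR_{max}\alpha+\tilde\Delta_{h-1}$, and the conclusion $|V^f_M-V^f_{\bar M}|\le\tilde\Delta_H\le\epsilon$ follows. With this adjustment the argument is complete.
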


\begin{proof}
For any control policy $f$ and any state-observation sequence $\rho=s_1,o_1,...,s_H,o_H$, we have
\begin{equation}
\label{error1}
\begin{aligned}
    &\mathopen|V^f_M-V^f_{\bar M}\mathclose|\\
    = & \mathopen|\sum_{\rho} p_{M}(\rho) R_c(\rho)- \sum_{\rho} p_{\bar M}(\rho) R_c(\rho)\mathclose|\\
    = & \sum_{\rho} \mathopen| p_{M}(\rho) - p_{\bar M}(\rho)\mathclose|\mathopen| R_c(\rho)\mathclose| \\
    \leq & \sum_{\rho} \mathopen| p_{M}(\rho) - p_{\bar M}(\rho)\mathclose| H R_{max}
\end{aligned}
\end{equation}
where $p_M(\rho)$ is the probability of sequence $\rho$ for the model $M$ and $R_c(\rho)$ is the cumulative reward for $\rho$ which is bounded by $H R_{max}$ where $R_{max}$ is the upper bound of the reward function.

The next step is to derive the bound of the probability error. Let $h_i$ denote the model where the transition probability is the same as $\bar M$ for the first $i$ step and the rest transition probability are identical to $M$. Thus $ p_{M}(\rho)=p_{h_0}(\rho)$ and $p_{\bar M}(\rho)=p_{h_H}(\rho)$. Then we have
\begin{equation}
\label{error2}
\begin{aligned}
    & \sum_{\rho} \mathopen| p_{h_0}(\rho) - p_{h_H}(\rho)\mathclose| \\
    = & \sum_{\rho} \mathopen| p_{h_0}(\rho) - p_{h_1}(\rho) + p_{h_1}(\rho) - \cdots + p_{h_H}(\rho)\mathclose| \\
    \leq & \sum_{i=0}^{H-1} \sum_{\rho} \mathopen| p_{h_i}(\rho) - p_{h_{i+1}}(\rho)\mathclose|
\end{aligned}
\end{equation}

Following the idea in \cite{fu2014probably}, we use $\tilde s$ to represent the state sequence and $\tilde o$ to represents the observation sequence.  Let $\hat s_i$ denote the $i$ step prefixes reaching $s_i$ and $\check s_i$ denote the suffixes starting at $s_i$ where $s_i$ is the state reached after $i$ step.
\begin{equation}
\label{error3}
\begin{aligned}
    & \sum_{\rho} \mathopen| p_{h_i}(\rho) - p_{h_{i+1}}(\rho)\mathclose|\\
    =  & \sum_{\tilde o} \sum_{\tilde s} \mathopen| p_{h_i}(\tilde o,\tilde s) - p_{h_{i+1}}(\tilde o,\tilde s)\mathclose|\\
    =  & \sum_{\tilde o} \sum_{\tilde s} \mathopen| p_{h_i}(\tilde o|\tilde s) p_{h_i}(\tilde s) - p_{h_{i+1}}(\tilde o|\tilde s)p_{h_{i+1}}(\tilde s)\mathclose|\\
    =  & \sum_{\tilde o} \sum_{s_i} \sum_{s_{i+1}} \sum_{\hat s_i} \sum_{\check s_{i+1}}  \mathopen|p_{h_i} (\hat s_i) p_{h_i}(\check s_{i+1}) p_{h_i}(s_{i+1}|s_i))\\
    & - p_{h_{i+1}} (\hat s_i) p_{h_{i+1}}(\check s_{i+1}) p_{h_{i+1}}(s_{i+1}|s_i))\mathclose| p_{h_i}(\tilde o|\tilde s)\\
    =  & \sum_{\tilde o} \sum_{s_i} \sum_{s_{i+1}} \sum_{\hat s_i} \sum_{\check s_{i+1}} p_{h_i} (\hat s_i) p_{h_i}(\check s_{i+1}) p_{h_i}(\tilde o|\tilde s) \\
    & \mathopen| p_{h_i}(s_{i+1}|s_i)-p_{h_{i+1}}(s_{i+1}|s_i) \mathclose|\\
    =  & \sum_{s_i} \sum_{s_{i+1}} \sum_{\hat s_i} \sum_{\check s_{i+1}} p_{h_i} (\hat s_i) p_{h_i}(\check s_{i+1}) \sum_{\tilde o} p_{h_i}(\tilde o|\tilde s) \frac{\epsilon}{H^2 N R_{max}} \\
    =  & (\sum_{s_i} \sum_{\hat s_i} p_{h_i} (\hat s_i) )( \sum_{s_{i+1}}  \sum_{\check s_{i+1}} p_{h_i}(\check s_{i+1})) \frac{\epsilon}{H^2 N R_{max}}\\
    = & \frac{\epsilon}{H^2 R_{max}}
\end{aligned}
\end{equation}
Thus combine results of equation \ref{error1}, \ref{error2} and \ref{error3}, we have
\begin{equation}
    \mathopen|V^f_M-V^f_{\bar M}\mathclose| \leq H^2 R_{max} \frac{\epsilon}{H^2 R_{max}} = \epsilon.
\end{equation}
\end{proof}

Theorem \ref{thero1} allows us to evaluate the performance distance with respect to the similarity of models. It guarantees that the expected cumulative reward of $\bar M$ is $\epsilon$ close to that of $M$ once the model $\bar M$ is $\frac{\epsilon}{H^2 N R_{max}}$-approximation of the model $M$.

However, this theorem is derived using the same control policy for different models. In reality, the control policy is solved using the estimated model and applied on the true model. The optimality loss should be the difference of the achieved reward when applying the derived control policy and the optimal policy to the true model.

\begin{theorem}\label{theor2}
Given a POMDP model $\bar M$ which is an $\frac{\epsilon}{2 H^2 N R_{max}}$-approximation of the model $M$ where $H$ is a finite horizon, $N$ is the number of states, $R_{max}$ is the upper bound of reward function and $0<\epsilon<1$. Let $f$ and $g$ be the $H$-step optimal control policy for model $\bar M$ and model $M$ respectively. Then the inequality $\mathopen|V^g_M-V^f_{M}\mathclose| \leq \epsilon$ holds where $V_M^f$ is the expected reward of the model $M$ under the control of the policy $f$ .
\end{theorem}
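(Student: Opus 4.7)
The plan is to reduce Theorem 2 to Theorem 1 by a standard triangle-inequality decomposition that inserts the cumulative rewards under the \emph{other} model as intermediate terms. Specifically, I would write
\begin{equation*}
V^g_M - V^f_M \;=\; \bigl(V^g_M - V^g_{\bar M}\bigr) \;+\; \bigl(V^g_{\bar M} - V^f_{\bar M}\bigr) \;+\; \bigl(V^f_{\bar M} - V^f_M\bigr).
\end{equation*}
Because $g$ is $H$-step optimal for $M$ and $f$ is $H$-step optimal for $\bar M$, we have $V^g_M \geq V^f_M$ (so the left-hand side is nonnegative and the absolute value can be dropped) and $V^g_{\bar M} - V^f_{\bar M} \leq 0$. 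Hence the middle term can be discarded and we are left with
\begin{equation*}
0 \;\leq\; V^g_M - V^f_M \;\leq\; \bigl(V^g_M - V^g_{\bar M}\bigr) + \bigl(V^f_{\bar M} - V^f_M\bigr) \;\leq\; \bigl|V^g_M - V^g_{\bar M}\bigr| + \bigl|V^f_M - V^f_{\bar M}\bigr|.
\end{equation*}

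Next I would invoke Theorem \ref{thero1} with the approximation parameter $\epsilon' = \epsilon/2$. Since $\bar M$ is an $\frac{\epsilon}{2H^2 N R_{max}}$-approximation of $M$, which is exactly $\frac{\epsilon'}{H^2 N R_{max}}$, Theorem \ref{thero1} applied separately to the policies $f$ and $g$ (it holds for \emph{any} control policy) yields
\begin{equation*}
\bigl|V^g_M - V^g_{\bar M}\bigr| \leq \tfrac{\epsilon}{2}, \qquad \bigl|V^f_M - V^f_{\bar M}\bigr| \leq \tfrac{\epsilon}{2}.
\end{equation*}
Summing the two bounds and substituting into the previous display gives $|V^g_M - V^f_M| \leq \epsilon$, which is the claimed inequality.

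The argument is essentially a one-line trick and I do not anticipate any real obstacle once Theorem \ref{thero1} is in hand; the only point that requires a sentence of care is the \emph{justification for dropping the middle term}, i.e. explicitly noting that optimality of $g$ on $M$ gives nonnegativity of $V^g_M - V^f_M$ while optimality of $f$ on $\bar M$ gives $V^g_{\bar M} - V^f_{\bar M} \leq 0$, so the two optimality conditions act in concert to eliminate the cross-model policy-swap term. After that, the choice of factor $2$ in the hypothesis is precisely what is needed to absorb the two invocations of Theorem \ref{thero1}.
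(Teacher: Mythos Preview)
Your proposal is correct and follows essentially the same approach as the paper: invoke Theorem \ref{thero1} twice (with $\epsilon'=\epsilon/2$) to bound $|V^f_M-V^f_{\bar M}|$ and $|V^g_M-V^g_{\bar M}|$, use optimality of $f$ on $\bar M$ to get $V^g_{\bar M}\le V^f_{\bar M}$, and use optimality of $g$ on $M$ to get $V^f_M\le V^g_M$. The paper presents this as the chain $V^g_M\le V^g_{\bar M}+\tfrac{\epsilon}{2}\le V^f_{\bar M}+\tfrac{\epsilon}{2}\le V^f_M+\epsilon$ rather than your explicit three-term telescoping, but the content is identical.
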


\begin{proof}
From theorem \ref{thero1}, we have $\mathopen|V^f_M-V^f_{\bar M}\mathclose| \leq \frac{\epsilon}{2}$ and $\mathopen|V^g_M-V^g_{\bar M}\mathclose| \leq \frac{\epsilon}{2}$. Since $f$ is the optimal policy for model $\bar M$, we have $V^g_{\bar M} \leq V^f_{\bar M}$. Then
\begin{equation}\label{inequality1}
    V^g_{ M} \leq V^g_{\bar M}+\frac{\epsilon}{2} \leq V^f_{\bar M}+ \frac{\epsilon}{2} \leq V^f_{M}+ \epsilon
\end{equation}
Since $g$ is the optimal control policy of $M$, we have $V^f_{M} \leq V^g_{M}$. Then
\begin{equation}\label{inequality2}
   V^f_{M} \leq V^g_{M} \leq V^f_{M}+ \epsilon
\end{equation}
Which implies the inequality $\mathopen|V^g_M-V^f_{M}\mathclose| \leq \epsilon$ holds.
\end{proof}

Theorem \ref{theor2} gives a relationship between the optimality loss and the similarity of models. Together with equation \ref{eq:num_data}, it gives a lower bound on the number of data needed to train the POMDP model. Conversely, if the number of data used to train the model is greater than the bound, the $H$-step expected cumulative reward is $\epsilon$ close to the optimal expected cumulative reward with confidence level no less than $\delta$.


\section{CONCLUSIONS}
\label{sec:conc}
In this paper, we proposed a framework to learn the POMDP model for HRC through demonstrations. Distinct from most of the existing work, we did not assume any knowledge on the  structure of the model. Instead, we proposed to use Bayesian non-parametric methods to automatically infer the number of states from the training data. When learning the transition probability, we provided a lower bound on the number of training data which guarantees the optimality loss is bounded using a control policy derived from the estimated model. In this paper, the observation probability was assumed to be precisely calculated using Monte Carlo approach. However, the observation probability was also subject to a confidence interval. Hence, taking the uncertainty of observation function into consideration will be our immediate future work.

\addtolength{\textheight}{-12cm}   






\bibliographystyle{unsrt}
\bibliography{ref}
\end{document}